\title{The Maximum Matrix Contraction problem}
\author{
	Dimitri Watel\inst{1,2}
\and
	Pierre-Louis Poirion\inst{1,3}}
\institute{
 CEDRIC-CNAM, 292 rue du faubourg Saint Martin, 75003, Paris, FRANCE
\and
 ENSIIE, 1 Square de la résistance, Evry, FRANCE
  \email{dimitri.watel@ensiie.fr, }
\and
ENSTA Paristech
  \email{pierre-louis.poirion@ensta-paristech.fr}
}
\newcommand{\gridsize}{0.5}
\newcommand{\prgrid}[3]{\draw[step=\gridsize,gray,very thin] (#1) grid (\gridsize*#3,\gridsize*#2);}
\newcommand{\prtline}[3]{}
\newcommand{\prtcolumn}[3]{}
\newcommand{\prvtline}[3]{\draw[very thick] ($(#1)+(0,\gridsize*#2)$) -- ($(#1)+(\gridsize*#3,\gridsize*#2)$);}
\newcommand{\prvtcolumn}[3]{\draw[very thick] ($(#1)+(\gridsize*#2,0)$) -- ($(#1)+(\gridsize*#2,\gridsize*#3)$);}
\newcommand{\prvtdline}[3]{\draw[very thick,dotted] ($(#1)+(0,\gridsize*#2)$) -- ($(#1)+(\gridsize*#3,\gridsize*#2)$);}
\newcommand{\prvtdcolumn}[3]{\draw[very thick,dotted] ($(#1)+(\gridsize*#2,0)$) -- ($(#1)+(\gridsize*#2,\gridsize*#3)$);}
\newcommand{\prone}[3]{\draw ($(#1)+(#3*\gridsize-0.5*\gridsize,#2*\gridsize-0.5*\gridsize)$) node {$1$};}
\newcommand{\prbul}[3]{\draw ($(#1)+(#3*\gridsize-0.5*\gridsize,#2*\gridsize-0.5*\gridsize)$) node {$\bullet$};}
\begin{document}

\theoremstyle{plain}
\newtheorem{corol}{Corollary}

\maketitle

\begin{abstract}
In this paper, we introduce the {\it Maximum Matrix Contraction problem}, where we aim to contract as much as possible a binary matrix in order to maximize its density.  We study the complexity and the polynomial approximability of the problem. Especially, we prove this problem to be NP-Complete and that every algorithm solving this problem is at most a $2\sqrt{n}$-approximation algorithm where $n$ is the number of ones in the matrix. We then focus on efficient algorithms to solve the problem: an integer linear program and three heuristics.\\
\textit{Keywords:} Complexity, Approximation algorithm, Linear Programming
\end{abstract}

\section{Introduction}
\label{sect:intro}

In this paper, we are given a two dimensional array in which some entries contain a dot and others are empty. Two lines $i$ and $i+1$ of the grid can be contracted by shifting up every dot of line $i+1$ and of every line after. Two columns $j$ and $j+1$ of the grid can be contracted by shifting left the corresponding dots. However, such a contraction is not allowed if two dots are brought into the same entry. The purpose is maximize the number of neighbor pairs of dots (including the diagonal ones). An illustration is given in Figure~\ref{fig:introduction:example}.

\begin{figure}[!ht]
	\center
  \scalebox{0.7}{
	\begin{tikzpicture}
		\coordinate (O) at (0,0);
		
		\clip (-1,-1.25) rectangle (2.5,2.1);
		
		\prgrid{O}{4}{4}
		
		\prvtdline{O}{1}{4};
		
		\prvtdcolumn{O}{1}{4};
		\prvtdcolumn{O}{2}{4};
		\prvtdcolumn{O}{3}{4};
		
		\prbul{O}{1}{2}
		\prbul{O}{1}{4}
		\prbul{O}{2}{3}
		\prbul{O}{3}{1}
		\prbul{O}{3}{3}
		\prbul{O}{4}{1}

		\draw ($(O)+(-0,0.25)$) node[anchor=east] {$4$};
		\draw ($(O)+(-0,0.75)$) node[anchor=east] {$3$};
		\draw ($(O)+(-0,1.25)$) node[anchor=east] {$2$};
		\draw ($(O)+(-0,1.75)$) node[anchor=east] {$1$};
		
		\draw ($(O)+(0.25,-0.3)$) node {$1$};
		\draw ($(O)+(0.75,-0.3)$) node {$2$};
		\draw ($(O)+(1.25,-0.3)$) node {$3$};
		\draw ($(O)+(1.75,-0.3)$) node {$4$};
		
		\draw ($(O)+(1,-0.8)$) node {$(a)$};
		
	\end{tikzpicture}
	\begin{tikzpicture}
	\coordinate (O) at (0,0);
	
	\clip (-1,-1.25) rectangle (2.5,2.1);
	
	\prgrid{O}{3}{4}
	
	\prbul{O}{1}{2}
	\prbul{O}{1}{3}
	\prbul{O}{1}{4}
	\prbul{O}{2}{1}
	\prbul{O}{2}{3}
	\prbul{O}{3}{1}
		
	\prvtdcolumn{O}{1}{3};

	\draw ($(O)+(-0,0.25)$) node[anchor=east] {$3/4$};
	\draw ($(O)+(-0,0.75)$) node[anchor=east] {$2$};
	\draw ($(O)+(-0,1.25)$) node[anchor=east] {$1$};
	
	\draw ($(O)+(0.25,-0.3)$) node {$1$};
	\draw ($(O)+(0.75,-0.3)$) node {$2$};
	\draw ($(O)+(1.25,-0.3)$) node {$3$};
	\draw ($(O)+(1.75,-0.3)$) node {$4$};
	\draw ($(O)+(1,-0.8)$) node {$(b)$};
	\end{tikzpicture}
	\begin{tikzpicture}
	\coordinate (O) at (0,0);
	
	\clip (-1,-1.25) rectangle (2.5,2.1);
	
	\prgrid{O}{3}{3}
	
	\prbul{O}{1}{1}
	\prbul{O}{1}{2}
	\prbul{O}{1}{3}
	\prbul{O}{2}{1}
	\prbul{O}{2}{2}
	\prbul{O}{3}{1}

	\draw ($(O)+(-0,0.25)$) node[anchor=east] {$3/4$};
	\draw ($(O)+(-0,0.75)$) node[anchor=east] {$2$};
	\draw ($(O)+(-0,1.25)$) node[anchor=east] {$1$};
	
	\draw ($(O)+(0.25,-0.3)$) node {$1/2$};
	\draw ($(O)+(0.75,-0.3)$) node {$3$};
	\draw ($(O)+(1.25,-0.3)$) node {$4$};
	\draw ($(O)+(0.75,-0.8)$) node {$(c)$};
	
	\end{tikzpicture}
  }
	\caption{In Figure~\ref{fig:introduction:example}.a, we give a $4 \times 4$ grid containing 6 dots. Valid contractions are represented by dotted lines and columns. It is not allowed to contract lines 1 and 2 because the two dots (1;1) and (2;1) would be brought into the same entry. Figure~\ref{fig:introduction:example}.b is the result of the contraction of lines 3 and 4 and Figure~\ref{fig:introduction:example}.c is the contraction of columns 1 and 2. The number of neighbor pairs in each grid is respectively 4, 7 and 10.
  }
	\label{fig:introduction:example}
\end{figure}
\vspace{-0.3cm}

\paragraph{Motivation. }
This problem has an application in optimal sizing of wind-farms \cite{Pillai2015} where we must first define, from a given set of wind-farms location, the neighborhood graph of this set, i.e. the graph such that two wind farms are connected if and only if their corresponding entries in the grid are neighbors. More precisely, given a set of points in the plane, we consider a first grid-embedding such that any two points are at least separated by one vertical line and one horizontal line. Then, we consider the problem of deciding which lines and columns to contract such that the derived embedding maximize the density of the grid, i.e., the number of edges in the corresponding neighbor graph.
\vspace{-0.2cm}
\paragraph{Contributions. } In this paper, we formally define the grid contraction problem as a binary matrix contraction problem in which every dot is a 1 and every other entry is 0. We study the complexity and the polynomial approximability of the problem. Especially, we prove this problem to be NP-Complete. Nonetheless, every algorithm solving this problem is at most a $2\sqrt{n}$-approximation algorithm where $n$ is the number of $1$ in the matrix. We then focus on efficient algorithms to solve the problem. We first investigate the mathematical programming formulation of MMC. We give two formulations: a straightforward non-linear program and a linear program.
Secondly, we describe three polynomial heuristics for the problem. We finally give numerical tests to compare the performances of the linear program and each algorithm. 

In Section~\ref{sec:problemdef}, we give a formal definition of the problem. In Section~\ref{sect:complexity}, we prove that the corresponding decision problem is NP-complete, then we give, in Section~\ref{sect:approx} some results about approximability of the problem. In Section~\ref{sec:linearprog} we derive a linear integer program for the model and run some experiments, then in the next section, we present and compare the three different heuristics.

\section{Problem definition}
\label{sec:problemdef}

The following definitions formalize the problem we want to solve with binary matrices. A binary matrix is a matrix with entries from $\{0,1\}$. Such a matrix modelizes a grid in which each dot is a 1 in the matrix.

\begin{definition}
Let $M$ be a binary matrix with $p$ lines and $q$ columns. For each $i \in \llbracket 1; p-1 \rrbracket$\footnote[1]{The meaning of $\llbracket p; q \rrbracket$ is the list $[p,p+1,\dots, q]$.} and each $j \in \llbracket 1; q-1 \rrbracket$, we define the \emph{line contraction matrix} $L_i$ and the \emph{column contraction matrix} $C_j$ by 
\begin{center}
\scalebox{0.7}{\mbox{$L_i = \hspace{0.2cm}\begin{blockarray}{ccccccccccccc}
1 & 2 & & & i & & & & & p && \\
\begin{block}{(cccccccccc)ccc}
1      &  0     & \cdots & 0      & 0 & 0 &  0  & 0 & \cdots & 0 &&& 1 \\
0      &  1     & \cdots & 0      & 0 & 0 &  0  & 0 & \cdots & 0 &&& 2\\
\vdots & \vdots & \ddots & \vdots & \vdots & \vdots &  \vdots  & \vdots & \ddots  & \vdots  &&&\\
0      &   0    &\cdots  & 1      & 0 & 0 &  0  & 0 & \cdots & 0 &&&\\ 
0      &   0    &\cdots  & 0      & 1 & 1 &  0  & 0 & \cdots & 0 & && i\\ 
0      &   0    &\cdots  & 0      & 0 & 0 &  1  & 0 & \cdots & 0 &&&\\
0      &   0    &\cdots  & 0      & 0 & 0 &  0  & 1 & \cdots & 0 &&&\\ 
\vdots      &   \vdots    &\ddots  & \vdots     & \vdots & \vdots & \vdots & \vdots & \ddots & \vdots  &&&\\
0      &   0    &\cdots  & 0      & 0 & 0  &  0 & 0 & \cdots & 1 &&& \\
0      &   0    &\cdots  & 0      & 0 & 0      &   0    & 0      &\cdots  & 0 &&& p\\
\end{block}
\end{blockarray}
\hspace{0.5cm}
C_j = \hspace{0.2cm}\begin{blockarray}{ccccccccccccc}
1 & 2 & & & j & & & & & q &&\\
\begin{block}{(cccccccccc)ccc}
1      &  0     & \cdots & 0      & 0 & 0 &  0  & \cdots & 0 & 0 &&& 1 \\
0      &  1     & \cdots & 0      & 0 & 0 &  0  & \cdots & 0 & 0 &&& 2\\
\vdots & \vdots & \ddots & \vdots & \vdots & \vdots &  \vdots  & \ddots & \vdots  & \vdots  &&&\\
0      &   0    &\cdots  & 1      & 0 & 0 &  0  & \cdots & 0 & 0 &&&\\ 
0      &   0    &\cdots  & 0      & 1 & 0 &  0  & \cdots & 0 & 0 &&& j\\ 
0      &   0    &\cdots  & 0      & 1 & 0 &  0  & \cdots & 0 & 0 &&&\\
0      &   0    &\cdots  & 0      & 0 & 1 &  0  & \cdots & 0 & 0 &&&\\ 
0      &   0    &\cdots  & 0      & 0 & 0  &  1 & \cdots & 0 & 0 &&&\\
\vdots      &   \vdots    &\ddots  & \vdots     & \vdots & \vdots & \vdots & \ddots & \vdots & \vdots  &&&\\
0      &   0    &\cdots  & 0      & 0 & 0      &   0    &\cdots  & 1      & 0 &&& q \\
\end{block}
\end{blockarray}.
$}}
\end{center}

The size of $L_i$ is $p \times p$ and the size of $C_j$ is $q \times q$.
\end{definition}

\begin{definition}
Let $M$ be a binary matrix of size $p \times q$, $I = [i_1, i_2, \dots, i_{|I|}]$ a sublist of $\llbracket 1;p-1 \rrbracket$ and $J = [j_1, j_2, \dots, j_{|I|}]$ a sublist of $\llbracket 1;q-1 \rrbracket$. We assume $I$ and $J$ are sorted. We define the \emph{contraction $C(M,I,J)$ of the lines $I$ and the columns $J$ of $M$} by the following matrix
$$
C(M,I,J) = \left(\prod\limits_{k = 1}^{|I|} L_{i_k}\right) \cdot M \cdot \left(\prod\limits_{k = |J|}^{1} C_{j_k}\right).
$$
\end{definition}

\begin{example}
	Let $M$ be the matrix corresponding to the grid of Figure~\ref{fig:introduction:example}.a. The following contraction gives the grid \ref{fig:introduction:example}.c:
	
	$$
	C(M,[3],[1]) = L_3 \cdot M \cdot C_1 = \begin{pmatrix}
	1 & 0 & 0 & 0 \\
	0 & 1 & 0 & 0 \\
	0 & 0 & 1 & 1 \\
	0 & 0 & 0 & 0 \\
	\end{pmatrix} \cdot \begin{pmatrix}
	1 & 0 & 0 & 0 \\
	1 & 0 & 1 & 0 \\
	0 & 0 & 1 & 0 \\
	0 & 1 & 0 & 1\\
	\end{pmatrix} \cdot \begin{pmatrix}
	1 & 0 & 0 & 0 \\
	1 & 0 & 0 & 0 \\
	0 & 1 & 0 & 0 \\
	0 & 0 & 1 & 0 \\
	\end{pmatrix} = \begin{pmatrix}
	1 & 0 & 0 & 0 \\
	1 & 1 & 0 & 0 \\
	1 & 1 & 1 & 0 \\
	0 & 0 & 0 & 0 \\
	\end{pmatrix}
	$$
\end{example}

\begin{definition}
	A contraction $C(M,I,J)$ is said \emph{valid} if and only if $C(M,I,J)$ is a binary matrix.
\end{definition}

\begin{example}
	The following contraction is not valid :
	
	$$
	C(M,[],[1,2]) = M \cdot C_2 \cdot C_1 = \begin{pmatrix}
	1 & 0 & 0 & 0 \\
	1 & 0 & 1 & 0 \\
	0 & 0 & 1 & 0 \\
	0 & 1 & 0 & 1 \\
	\end{pmatrix} \cdot \begin{pmatrix}
	1 & 0 & 0 & 0 \\
	0 & 1 & 0 & 0 \\
	0 & 1 & 0 & 0 \\
	0 & 0 & 1 & 0 \\
	\end{pmatrix} \cdot\begin{pmatrix}
	1 & 0 & 0 & 0 \\
	1 & 0 & 0 & 0 \\
	0 & 1 & 0 & 0 \\
	0 & 0 & 1 & 0 \\
	\end{pmatrix}  = \begin{pmatrix}
	1 & 0 & 0 & 0 \\
	2 & 0 & 0 & 0 \\
	1 & 0 & 0 & 0 \\
	1 & 1 & 0 & 0 \\
	\end{pmatrix}
	$$
\end{example}

\begin{definition}
	\label{def:density}
	Let $M$ be a binary matrix of size $p \times q$. The \emph{density} is the number of neighbor pairs of 1 in the matrix (including the diagonal pairs). This value may be computed with the following formula :
	$$ 
	d(M) = \frac{1}{2} \cdot \sum\limits_{i,j} \left( M_{i,j} \cdot \left(\sum\limits_{\delta = -1}^1 \sum\limits_{\gamma = -1}^1  M_{i+\delta,j+\gamma}\right) - 1 \right)
	$$
	where we define that $M_{i,j}=0$ if $(i,j) \notin \llbracket 1;p-1 \rrbracket \times \llbracket 1;q-1 \rrbracket$
\end{definition}

\begin{problem}\label{problem1}
	\emph{Maximum Matrix Contraction problem} (MMC). Given a binary matrix $M$ of size $p \times q$ such that $n$ entries equal 1 and $p \cdot q - n$ entries equal 0, the Maximum Matrix Contraction problem consists in the search for two sublists $I$ of $\llbracket 1;p-1 \rrbracket$ and $J$ of $\llbracket 1;q-1 \rrbracket$ such that the contraction $C(M,I,J)$ is valid and maximizes $d(C(M,I,J))$.
\end{problem}

We study in the next two sections the complexity and the approximability of this problem.

\section{Complexity}
\label{sect:complexity}
 
This section is dedicated to proving the NP-Completeness of the problem.

  \begin{theorem}
  \label{theo:complexity}
  The decision version of (MMC) is NP-Complete.
  \end{theorem}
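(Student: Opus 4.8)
The plan is to prove the two halves of the statement separately. Membership in NP is immediate: a certificate is simply the pair of sublists $(I,J)$, whose total size is at most $p+q$; given it one forms the product of contraction matrices defining $C(M,I,J)$, checks in polynomial time that all of its entries lie in $\{0,1\}$ (which is exactly validity), and computes $d(C(M,I,J))$ directly from the formula of Definition~\ref{def:density}. Hence the decision question ``does $M$ admit a valid contraction with $d(C(M,I,J))\ge k$?'' is in NP.

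For NP-hardness I would reduce from a classical NP-complete problem, e.g.\ $3$-SAT. The first step is a convenient reformulation of what the matrices $L_i$ and $C_j$ do: selecting $I$ (resp.\ $J$) is the same as selecting which gaps between consecutive rows (resp.\ columns) are closed, i.e.\ choosing an interval partition of the rows into blocks $R_1<\dots<R_{p'}$ and of the columns into blocks $C_1<\dots<C_{q'}$; a $1$ originally in row $r$ and column $c$ ends up in the coarse cell $(a,b)$ with $r\in R_a$, $c\in C_b$, validity means every coarse cell contains at most one $1$, and $d$ counts the pairs of $1$'s whose coarse cells are neighbours (Chebyshev distance $1$). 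In the reduction the columns of $M$ play the role of the variables — putting the column-gap associated with $x_j$ into $J$ encodes $x_j=\mathrm{true}$ — and the rows are cut into one horizontal band per clause. Inside the band of a clause I would place a constant-size pattern of $1$'s whose maximum attainable density, over valid contractions, is larger by a fixed amount exactly when the band ``sees'' at least one of its three variable-columns contracted with the satisfying polarity; columns reserved as buffers around each variable column keep the bands from interfering, so the overall optimum is a fixed base value plus the number of clauses, and this threshold is met iff the formula is satisfiable.

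The step I expect to be the main obstacle is the reverse implication: showing that a valid contraction reaching the threshold must be ``canonical'' — each variable column either wholly contracted or wholly left alone, and each clause band in its satisfied configuration — so that a satisfying assignment can be read off. This calls for a tight global density upper bound of the form ``every valid contraction satisfies $d\le\text{base}+\#\text{clauses}$, with equality only for canonical contractions'', proved by bounding, band by band, the number of neighbour pairs any contraction can produce and showing that no additional pairs can be gained across bands. A secondary difficulty the gadget design must absorb is that a single $L_i$ or $C_j$ shifts \emph{every} subsequent line or column, so the bands and buffers have to be robust to such global shifts; it is worth noting that this two-dimensional coupling appears essential, since the restrictions in which only rows, or only columns, may be contracted look polynomially solvable.
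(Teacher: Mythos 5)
Your NP-membership argument matches the paper's and is fine. The hardness half, however, is not a proof but a plan: you name the source problem (3-SAT), describe the intended roles of rows and columns, and then explicitly defer the two things that actually constitute the reduction --- the clause gadget itself (``a constant-size pattern of $1$'s whose maximum attainable density \dots is larger by a fixed amount exactly when \dots'') and the soundness argument that a contraction reaching the threshold must be canonical. Until a concrete pattern of $1$'s is written down and the band-by-band density bound is proved, there is no reduction to verify, so as it stands the NP-hardness claim is unsupported. Two specific difficulties that your sketch glosses over and that I would expect to sink a naive instantiation: (i) negative literals --- a clause must gain density when a variable column is \emph{not} contracted, but density is only ever gained by bringing $1$'s together, so you would need paired columns for $x_j$ and $\lnot x_j$ plus a consistency gadget forcing exactly one of them to be contracted, none of which is specified; (ii) the cap of one unit of gain per clause --- if a clause with two or three true literals can contribute more than one to the density, the threshold ``base $+$ number of clauses'' no longer certifies that \emph{every} clause is satisfied, so you must also prove a per-gadget upper bound of $1$ together with the claim that no cross-band pairs can ever be created.

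For comparison, the paper reduces from Maximum Clique rather than 3-SAT, and this choice dissolves both difficulties. Each vertex $v_i$ gets a block of four lines and four columns carrying two $1$'s at $M_{l_i,c_i}$ and $M_{l_i+2,c_i+2}$; contracting line $l_i$ and column $c_i$ brings them together and adds exactly $1$ to the density. For every non-edge $(v_i,v_j)$, two extra $1$'s are placed at $M_{l_i,c_j}$ and $M_{l_i+1,c_j+1}$ so that contracting $l_i$ aligns them on a line and makes the subsequent contraction of $c_j$ invalid; hence the set of vertices whose lines and columns are all contracted must be a clique, and the density gain equals the clique size. Every choice in that reduction is a ``positive'' one (contract or not, with only contraction yielding gain), so no polarity machinery is needed, and the validity constraint itself enforces the combinatorial structure, replacing the delicate density upper bound your plan would require. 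If you want to pursue the 3-SAT route you must supply the missing gadgets and the soundness lemma; otherwise the clique reduction is the shorter path.
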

  \begin{proof}

    Let $M$ be an instance of MMC. Given an integer $K$, a sublist $I$ of $\llbracket 1; p-1 \rrbracket$ and a sublist $J$ of $\llbracket 1; q-1 \rrbracket$, we can compute in polynomial time the matrix $C(M,I,J)$ and check if the contraction is valid and if $d(C(M,I,J)) \geq K$. This proves the problem belongs to NP.

    In order to prove the NP-Hardness, we describe a polynomial reduction from the NP-Complete Maximum Clique problem \cite{Karp1972}. Lets $G(V,E)$ be an instance of the Maximum Clique problem, we build an instance $M$ of MMC with $p = q = (4|V| + 6)$. We arbitrarily number the nodes of $G$ : $V = \{v_1, v_2, \dots v_{|V|}\}$.

    Let $l_i$ and $c_i$ be respectively the $6+4(i-1)+1$-th line and the $6+4(i-1)+1$-th column. We associate the four lines $l_i, l_i+1, l_i+2, l_i+3$ and the four columns $c_i, c_i +1, c_i+2, c_i + 3$ to $v_i$. The key idea of the reduction is that each node $v$ is associated with two 1 of the matrix. If we choose the node $v$ to be in the clique, then, firstly, the two 1 associated with $v$ are moved next to each other and this increases the density by one; and secondly, for every node $w$ such that $(v,w) \not\in E$, the two 1 associated to cannot be moved anymore.

A complete example is given in Figure~\ref{fig:reduction:example}. For each node $v_i$, we set $M_{l_i,c_i} = M_{l_i+2,c_i+2} = 1$. If the nodes $v_i$ and $v_j$ are not linked with an edge, we set $M_{l_i,c_j} = M_{l_i+1,c_j+1} = 1$. If, on the contrary, there is an edge $(v_i,v_j)$, then the intersections of the lines of $v_i$ and the column of $v_j$ is filled with 0. Finally, we add some 1 in the six first columns and the six first lines of the matrix such that only the contractions of the line $l_i$ and the column $c_i$ for $i \in \llbracket 1;n \rrbracket$ are valid.


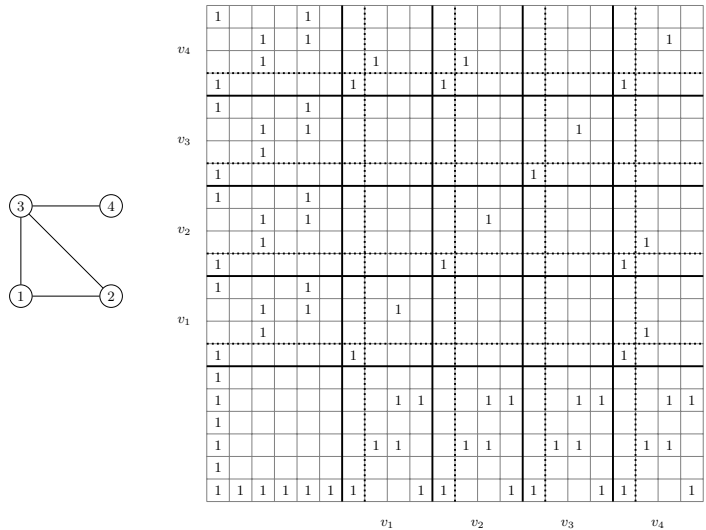
\begin{figure}
\centering
\scalebox{0.6}{
		\begin{tikzpicture}
		
		\clip (-0.25,-5.25) rectangle (3.25,7.25);
		\tikzset{tinoeud/.style={draw, circle, minimum height=0.5cm}}
		\node[tinoeud] (U) at (0,0) {};
		\node[tinoeud] (V) at (2,0) {};
		\node[tinoeud] (W) at (0,2) {};
		\node[tinoeud] (T) at (2,2) {};

    \draw (U) -- (V);
    \draw (V) -- (W);
    \draw (W) -- (U);
    \draw (W) -- (T);

    \draw (U) node {$1$};
    \draw (V) node {$2$};
    \draw (W) node {$3$};
    \draw (T) node {$4$};

		\end{tikzpicture}

		\begin{tikzpicture}
		
		\coordinate (O) at (0,0);

    \prgrid{O}{22}{22}

    \prvtline{O}{6}{22}
    \prvtline{O}{10}{22}
    \prvtline{O}{14}{22}
    \prvtline{O}{18}{22}
    
    \prvtcolumn{O}{6}{22}
    \prvtcolumn{O}{10}{22}
    \prvtcolumn{O}{14}{22}
    \prvtcolumn{O}{18}{22}
    
    \prvtdline{O}{7}{22}
    \prvtdline{O}{11}{22}
    \prvtdline{O}{15}{22}
    \prvtdline{O}{19}{22}
    
    \prvtdcolumn{O}{7}{22}
    \prvtdcolumn{O}{11}{22}
    \prvtdcolumn{O}{15}{22}
    \prvtdcolumn{O}{19}{22}
    
    \draw ($(O)+(-0.5,4)$) node {$v_1$};
    \draw ($(O)+(-0.5,6)$) node {$v_2$};
    \draw ($(O)+(-0.5,8)$) node {$v_3$};
    \draw ($(O)+(-0.5,10)$) node {$v_4$};
    
    \draw ($(O)+(4,-0.5)$) node {$v_1$};
    \draw ($(O)+(6,-0.5)$) node {$v_2$};
    \draw ($(O)+(8,-0.5)$) node {$v_3$};
    \draw ($(O)+(10,-0.5)$) node {$v_4$};

    \prone{O}{1}{1};
    \prone{O}{1}{2};
    \prone{O}{1}{3};
    \prone{O}{1}{4};
    \prone{O}{1}{5};
    \prone{O}{1}{6};
    
    \prone{O}{2}{1};
    \prone{O}{3}{1};
    \prone{O}{4}{1};
    \prone{O}{5}{1};
    \prone{O}{6}{1};
 
    \prone{O}{7}{1};
    \prone{O}{8}{3};
    \prone{O}{9}{3};
    \prone{O}{9}{5};
    \prone{O}{10}{5};
    \prone{O}{10}{1};
 
    \prone{O}{11}{1};
    \prone{O}{12}{3};
    \prone{O}{13}{3};
    \prone{O}{13}{5};
    \prone{O}{14}{5};
    \prone{O}{14}{1};
 
    \prone{O}{15}{1};
    \prone{O}{16}{3};
    \prone{O}{17}{3};
    \prone{O}{17}{5};
    \prone{O}{18}{5};
    \prone{O}{18}{1};
 
    \prone{O}{19}{1};
    \prone{O}{20}{3};
    \prone{O}{21}{3};
    \prone{O}{21}{5};
    \prone{O}{22}{5};
    \prone{O}{22}{1};

    \prone{O}{1}{7};
    \prone{O}{3}{8};
    \prone{O}{3}{9};
    \prone{O}{5}{9};
    \prone{O}{5}{10};
    \prone{O}{1}{10};
 
    \prone{O}{1}{11};
    \prone{O}{3}{12};
    \prone{O}{3}{13};
    \prone{O}{5}{13};
    \prone{O}{5}{14};
    \prone{O}{1}{14};
 
    \prone{O}{1}{15};
    \prone{O}{3}{16};
    \prone{O}{3}{17};
    \prone{O}{5}{17};
    \prone{O}{5}{18};
    \prone{O}{1}{18};
 
    \prone{O}{1}{19};
    \prone{O}{3}{20};
    \prone{O}{3}{21};
    \prone{O}{5}{21};
    \prone{O}{5}{22};
    \prone{O}{1}{22};

    \prone{O}{7}{7};
    \prone{O}{9}{9};

    \prone{O}{11}{11};
    \prone{O}{13}{13};

    \prone{O}{15}{15};
    \prone{O}{17}{17};

    \prone{O}{19}{19};
    \prone{O}{21}{21};

    \prone{O}{7}{19};
    \prone{O}{8}{20};

    \prone{O}{19}{7};
    \prone{O}{20}{8};

    \prone{O}{11}{19};
    \prone{O}{12}{20};

    \prone{O}{19}{11};
    \prone{O}{20}{12};

    \end{tikzpicture}}

\caption{This figure illustrates, on the left, a graph in which we search for a maximum clique and, on the right, the matrix obtained built with the reduction. We do not show the 0 entries of the matrix for readability. The dotted lines and columns represent the valid contractions.}
\label{fig:reduction:example}
\end{figure}

The initial density in this matrix is $d_0 = 11 + 6|V| + (|V| (|V|-1) - 2|E|)$. Note that, in order to add one to the density of the matrix, the only way is to choose a node $v_i$ and contract the column $c_i$ and the line $l_i$. If the column $c_i$ is contracted and if $(v_i, v_j) \not \in E$, the two entries $M_{l_j,c_i}$ and $M_{l_j+1,c_i+1}$ are moved on the same column. Similarly, if the line $l_i$ is contracted, the two entries $M_{l_i,c_j}$ and $M_{l_i+1,c_j+1}$ are moved on the same line. This prohibits the contraction of the line $l_j$ and the column $c_j$. Consequently, in order to add $C$ to the density, we must find a clique of size $C$ in the graph and contract every line and column associated with the nodes of that clique.

Thus, there is a clique of size $K$ if and only if there is a feasible solution for $M$ of density $d_0 + K$. This concludes the proof of NP-Completeness.
\end{proof}
  
The Maximum Clique problem cannot be approximated to within $|V|^{\frac{1}{2} - \varepsilon}$ in polynomial time unless P = NP \cite{Hastad1999}. Unfortunately, the previous reduction cannot be used to prove a negative approximability result occurs for MMC. Indeed, the density of any feasible solution of the MMC instance we produce is between $d_0 + 1$ and $d_0 + |V|$, with $d_0 = O(|V|^2 - |E|)$. Consequently, the optimal density is at most $(1 + 1/|V|)$ times the worst density. A way to prove a higher inapproximability ratio for MMC would be to modify the reduction such that the gap between the optimal solution and another feasible solution increases.

In the next section, we prove that a $n^{\frac{1}{2} - \varepsilon}$ harness ratio would almost tight the approximability of MMC as there exists a $2\sqrt{n}$-approximation algorithm.

\section{Approximability}
\label{sect:approx}

In this section we define the notion of maximal feasible solution and prove that every algorithm returning a maximal feasible solution is a $2\sqrt{n}$-approximation where $n$ is the number of 1 in the matrix.

\begin{definition}
We say a feasible solution is \emph{maximal} if it is not strictly included in another feasible solution. In other words, when all the lines and columns of that solution are contracted, no contraction of any other line or column is valid.
\end{definition}

\begin{lemma}
\label{lem:bounds}
Let $M$ be an instance of MMC, $(I,J)$ be a maximal feasible solution and $M' = C(M,I,J)$ then $2 \sqrt{n} \leq d(M') \leq 4n$.
\end{lemma}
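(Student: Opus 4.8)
The plan is to handle the two inequalities separately; the upper bound is immediate and all the content is in the lower one. A fact used in both directions is that a valid contraction never changes the number of ones: each $L_i$ and each $C_j$ has exactly one $1$ per column, so left/right multiplication by such a matrix preserves the sum of all entries, and for a binary matrix this sum equals the number of ones; hence $M'$ still has exactly $n$ ones.

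\emph{Upper bound.} By Definition~\ref{def:density}, $2\,d(M')$ is the sum, over the $n$ cells of $M'$ carrying a $1$, of the number of ones among that cell's (at most eight) king-neighbours; this is at most $8n$, so $d(M')\le 4n$.

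\emph{Lower bound.} The first step is to extract the combinatorial structure forced by maximality. If two adjacent rows of $M'$ had disjoint supports they could be contracted, and the result would be a feasible solution strictly containing $(I,J)$; the same holds for columns. Consequently the ones of $M'$ occupy a consecutive block of rows and of columns — say rows $1,\dots,a$ and columns $1,\dots,b$ once the all-zero border rows and columns (which carry no ones and play no role) are discarded — every one of these rows and columns contains a $1$, and for each $i$ the rows $i,i+1$ share a column $j$ with $M'_{i,j}=M'_{i+1,j}=1$, and symmetrically for columns. Reading off these $a-1$ vertically coincident pairs and $b-1$ horizontally coincident pairs as distinct adjacent pairs of ones, we get $d(M')\ge (a-1)+(b-1)=a+b-2$. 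On the other hand the $n$ ones lie inside an $a\times b$ rectangle, so $n\le ab$, whence $a+b\ge 2\sqrt{ab}\ge 2\sqrt n$ by the AM--GM inequality.

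These two observations already yield $d(M')\ge 2\sqrt n-2$; recovering the last additive constant is the main obstacle, and I would do it by cases on $a+b$. When $a+b\ge 2\sqrt n+2$ the bound $d(M')\ge a+b-2$ already finishes the job. When $a+b$ is near its minimum value $2\sqrt n$, then $ab$ is forced to be close to $n$, i.e. the $a\times b$ box is almost entirely filled with ones; a full $a\times b$ box contains $2(a-1)(b-1)$ pairs of diagonally adjacent cells, and each of the $ab-n$ empty cells destroys at most four of them, so $d(M')\ge (a-1)+(b-1)+2(a-1)(b-1)-4(ab-n)=4n-2ab-(a+b)$. Substituting $ab\le (a+b)^2/4$ and combining with $d(M')\ge a+b-2$ then gives $d(M')\ge 2\sqrt n$ throughout the remaining range, the extremal configuration being the fully filled square box, for which the diagonal adjacencies dominate. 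The only delicate point is the bookkeeping ensuring the two regimes overlap for every value of $a+b$.
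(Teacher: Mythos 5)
Your core argument coincides with the paper's: maximality forces, for every pair of adjacent rows of $M'$, a column in which both rows carry a $1$ (otherwise that row contraction would still be valid and $(I,J)$ would not be maximal), symmetrically for columns; these pairs are counted and AM--GM together with $p'q'\ge n$ gives the lower bound, while the upper bound is the trivial at-most-$8$-neighbours count. You are in fact more careful than the paper on one point: the honest count is $(p'-1)+(q'-1)=p'+q'-2$ distinct adjacent pairs, whereas the paper simply writes $d(M')\ge p'+q'$ at this step without accounting for the missing $2$. So up to that constant your proof is the paper's proof.

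The gap is in your attempt to recover the additive constant, and it cannot be closed. Quantitatively, with $s=a+b$ your two bounds are $d(M')\ge s-2$ and $d(M')\ge 4n-2ab-s\ge 4n-\tfrac{s^2}{2}-s$; one is increasing and the other decreasing in $s$, they cross at $s=2\sqrt{2n+2}-2$, and their common value there is $2\sqrt{2n+2}-4$, which is at least $2\sqrt{n}$ only when $n-4\sqrt{n}-2\ge 0$, i.e.\ $n\ge 20$. For small $n$ no bookkeeping can succeed because the inequality $d(M')\ge 2\sqrt{n}$ is itself false: the $1\times 4$ all-ones matrix admits no valid contraction, so the empty solution is maximal, yet $n=4$ and $d=3<4=2\sqrt{n}$. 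The correct general statement is $d(M')\ge 2\sqrt{n}-2$ (or $2\sqrt{n}$ for $n$ large enough), which is what your first count actually proves and which still suffices for the intended $O(\sqrt{n})$ approximation guarantee of Theorem~\ref{theo:sqrtnapprox}; the "delicate bookkeeping" you flagged is precisely the point where both your patch and the paper's own proof are deficient.
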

\begin{proof}
A 1 in a matrix cannot have more than $8$ neighbors, thus the density of $M'$ is no more than $4n$.

Let $p'$ and $q'$ be respectively $p - |I|$ and $q - |J|$. For each line $i \in \llbracket 1,p'-1 \rrbracket$ of $M'$, there is a column $j$ such that $M'_{i,j} = M'_{i+1,j} = 1$, otherwise we could contract line $i$ and $(I,J)$ would not be maximal. Similarly for each column $j \in \llbracket 1,q'-1 \rrbracket$. Thus $d(M') \geq p'+q'$ where $p' \times q'$ is the size of $M'$. From the inequality of arithmetic and geometric means, we have $p' + q ' \geq 2 \sqrt{p'\cdot q'}$ and, as $M'$ contains $n$ entries such that $M'_{i,j} = 1$, $p'\cdot q' \geq n$. Thus $p' + q ' \geq 2 \sqrt{n}$.
\end{proof}

From the upper bound and the lower bound given in the previous lemma, we can immediately prove the following theorem. 

\begin{theorem}
	\label{theo:sqrtnapprox}
An algorithm returning any maximal solution of an instance of MMC is a $2\sqrt{n}$-approximation.
\end{theorem}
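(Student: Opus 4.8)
The plan is to read the theorem off directly from Lemma~\ref{lem:bounds}, so the argument is essentially a two‑line sandwich between an upper bound on the optimum and a lower bound on whatever maximal solution the algorithm returns. Let $M$ be an instance of MMC with $n$ ones, let $(I^\star,J^\star)$ be an optimal solution with $\mathrm{OPT}=d(C(M,I^\star,J^\star))$, and let $(I,J)$ be the maximal feasible solution produced by the algorithm, with $M'=C(M,I,J)$.

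First I would note that the upper bound half of Lemma~\ref{lem:bounds} does not use maximality at all: the only fact invoked is that a $1$ in a binary matrix has at most $8$ neighbours, hence the number of neighbour pairs of $1$'s is at most $8n/2 = 4n$. Since this applies to \emph{every} binary matrix obtained by a valid contraction — in particular to the optimal one — we get $\mathrm{OPT} \le 4n$. Here I would also remark that $n$ is unchanged by a valid contraction (a valid contraction keeps all entries in $\{0,1\}$ and merely relocates the ones), so the same $n$ appears in both bounds.

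Next, since $(I,J)$ is a maximal feasible solution, Lemma~\ref{lem:bounds} gives $d(M') \ge 2\sqrt{n}$. Combining the two inequalities, the approximation ratio of the algorithm on this instance is
$$
\frac{\mathrm{OPT}}{d(M')} \;\le\; \frac{4n}{2\sqrt{n}} \;=\; 2\sqrt{n},
$$
which is exactly the claimed bound (the degenerate case $n=0$ being vacuous). I do not anticipate a real obstacle here: the only point that needs a word of care is that the $4n$ bound is stated in Lemma~\ref{lem:bounds} for a maximal solution, whereas we need it for the optimum — but as observed above the proof of that half is oblivious to maximality, so it transfers verbatim. Everything else is arithmetic.
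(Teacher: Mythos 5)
Your proof is correct and is exactly the argument the paper intends: combine the $4n$ upper bound (which, as you rightly note, holds for any valid contraction and in particular for the optimum) with the $2\sqrt{n}$ lower bound for the maximal solution from Lemma~\ref{lem:bounds}. The paper states this as immediate from the lemma without writing it out; your only addition is the (worthwhile) remark that the upper-bound half does not rely on maximality.
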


Theorem~\ref{theo:sqrtnapprox} proves a default ratio for every algorithm trying to solve the problem. Note that there are instances in which the ratio between an optimal density and the lowest density of a maximal solution is $O(\sqrt{n})$. An example is given in the external report in \cite{WatelPoirionAppendix}. In Section~\ref{sect:heuristics}, we describe three natural heuristics to solve the problem. We show in \cite{WatelPoirionAppendix} that their approximability ratio is $O(\sqrt{n})$ by exhibiting a worst case instance.

Determining if MMC can be approximated to within a constant factor is an open question. As it was already pointed at the end of Section~\ref{sect:complexity}, the problem may possibly be not approximable to within $n^{\frac{1}{2}-\varepsilon}$ and this would almost tight the approximability of MMC.

The next two sections focus on efficient algorithms to solve the problem. The next section is dedicated to the mathematical programming methods.

\section{Linear integer programming}
\label{sec:linearprog}

For $i \in \llbracket 1; p-1 \rrbracket$ (resp.  $j \in \llbracket 1; q-1 \rrbracket$ ), let $x_i$ (resp. $y_j$) be the binary variable such that $x_i=1$ (resp. $y_j=1$) if and only if line $i$ is contracted, i.e. $i \in I$ (resp. column $j$ is contracted, i.e. $j \in J$). From the definitions of Section~\ref{sec:problemdef}, we can model the MMC problem by the following non-linear binary program:

\begin{equation*}
(\ast)\left\{
\begin{array}{lll}
\max\limits_{x,y}  \quad	& d(A)  \\
& A= \prod\limits_{i=1}^{p-1}((L_i-I_p)x_i+I_p)M\prod\limits_{j=q-1}^{1}((C_j-I_q)y_j+I_q) \\
& A_{i,j} \le 1, \quad \forall (i,j) \in \llbracket 1; p-1 \rrbracket \times \llbracket 1; q-1 \rrbracket \\
& x_i,y_j \in \{0,1\}
\end{array}\right.
\end{equation*}
where $I_p$ denotes the identity matrix of size $p$ and where the formula of $d(A)$ is the one given in Definition~\ref{def:density}.

\noindent Although this formulation is very convenient to write the mathematical model, it is intractable as we would need to add an exponential number of linearizations: for all subset $ I,J \subseteq \llbracket 1; p-1 \rrbracket \times \llbracket 1; q-1 \rrbracket$ we would need a variable $x_I=\prod\limits_{i \in I}x_i $ and $y_I=\prod\limits_{j \in J}y_j $.\\

\noindent We now present a linear integer programming model for the MMC problem: instead of linearizing the products $\prod\limits_{i \in I}x_i$ and $\prod\limits_{j \in J}y_j$, we cut the product \\
$A= \prod\limits_{i=1}^{p-1}((L_i-I_p)x_i+I_p)M\prod\limits_{j=q-1}^{1}((C_j-I_q)y_j+I_q) $ in $T=p+q-1$ time-steps. More precisely, define $A^{(1)}=M$; for all $t=2,...,p$, we define by $A^{(t)}$ the matrix which is computed after deciding the value of $y_j$ for $j \ge p-t+1$; similarly, for all $p+1\le t \le T$, $A^{(t)}$ is determined by the value of $y_j$ for all $j$ and by the value of $x_i$ for $i \ge q -t+p $. We obtain the following program:

\begin{equation*}
(P)\left\{
\begin{array}{lll}
\max\limits_{x,y}  \quad	& d(A^{(T)})  \\
& A^{(t+1)}= ((L_{p-t}-I_p)x_{p-t}+I_p)A^{(t)} \qquad & \forall 1\le t\le p-1\\
& A^{(t+1)}= A^{(t)}((C_{q-t+p}-I_q)y_{q-t+p}+I_q) \qquad & \forall p\leq t\le T\\
& A^{(t)}_{i,j} \le 1, \quad \forall (i,j,t) \in \llbracket 1; p-1 \rrbracket \times \llbracket 1; q-1 \rrbracket \times \llbracket 2; T \rrbracket \\
& x_i,y_j \in \{0,1\}
\end{array}\right.
\end{equation*}

\noindent We can easily linearize the model above by introducing, for all $(i,j,t) \in \llbracket 1; p-1 \rrbracket \times \llbracket 1; q-1 \rrbracket \times \llbracket 2; T \rrbracket $ $r_{i,j,t}=A^{(t)}_{i,j}*x_{p-t}$ if $1\le t\le p-1$ and $r_{i,j,t}=A^{(t)}_{i,j}*y_{q-t+p}$ if $p+1\le t\le T$, noticing that the variables $A^{(t)}_{i,j},x_t,y_t$ are all binary. Finally, after linearizing the product $A^{(T)}_{i,j}A^{(T)}_{k,l}$ in the objective function, $d(A^{(T)})$, we obtain a polynomial size integer programming formulation of the MMC problem.

\subsection{Numerical results}

We test the proposed model using IBM ILOG CPLEX 12.6. The experiments are performed on an Intel i7 CPU at 2.70GHz with 16.0 GB RAM. The models are implemented in Julia using JuMP \cite{Lubin2015}. The algorithm is run on random squared matrices. Given a size $p$ and a probability $r$, we produce a random binary matrix $M$ of size $p \times p$ such that $Pr(M_{i,j} = 1) = r$. The expected value of $n$ is then $r \cdot p^2$. We test the model for $n \in \{6,9,12\}$ for a probability $r \in \{0.1,0.15,0.2,0.25,0.3\}$ and we report the optimal value $d^*$ and the running time. For each value of $p$ and $r$, 10 random instances are created, whose averages are reported on Table~\ref{tab:lp}.

\vspace{-0.5cm}

\begin{table}[ht!]
	\centering
	\caption{Test of random instances for the integer linear program model.}
	\def\arraystretch{1.2}
	\setlength\tabcolsep{0.075cm}
	\small
	\begin{tabular}{| c | c | c | c | c | c |c| c| c| c| c| }
		\hline
		& \multicolumn{10}{c |}{r} \\
		\hline
		& \multicolumn{2}{c |}{0.1} & \multicolumn{2}{c |}{0.15} & \multicolumn{2}{c |}{0.20} & \multicolumn{2}{c |}{0.25} & \multicolumn{2}{c |}{0.3} \\
		\hline
		& $d^*$ & time (s) & $d^*$ & time (s) & $d^*$ & time (s) & $d^*$ & time (s) & $d^*$ & time (s) \\
		\hline
		(p,q)=(6,6) & 6.0 & 0.3 & 4.1 & 0.26 & 12.1 & 0.2 & 15.3 & 0.28 & 22.0 & 0.15 \\ 
		\hline
		(p,q)=(9,9) & 15.1 & 5.3 & 22.1 & 5.1 & 32.3 & 7.8 & 36.5 & 7.0 & 44.5 & 3.4 \\ 
		\hline
		(p,q)=(12,12) & 30.8 & 171.6 & 48.0 & 281.2 & 55.0 & 183.0 & 64.4 & 101.0 & 71.0 & 95.1 \\ 
		\hline
	\end{tabular}
	\label{tab:lp}
\end{table}

We notice that the integer programming model is not very efficient to solve the problem. For $p=15$, in most of the cases,  CPLEX needs to run more than 2 hours to solve the model.

\section{Polynomial heuristics}
\label{sect:heuristics}

In this section, we describe three heuristics for MMC : a first-come-first-served algorithm and two greedy algorithms.

\subsection{The LCL heuristic}

This algorithm is a first-come-first-served algorithm. It is divided into two parts: the Line-Column (LC) part and the Column-Line (CL) part. 

The LC part computes and returns a maximal feasible solution $M^{LC}$ by, firstly, contracting a maximal set of lines $I^{LC}$ and, then, by contracting a maximal set of columns $J^{LC}$. The algorithm builds $I^{LC}$ as follows: it checks for each line from $p-1$ down to $1$ if the contraction of that line is valid. In that case, the contraction is done and the algorithm goes on. $J^{LC}$ is built the same way.

The CL part computes and returns a maximal feasible solution $M^{CL}$ by starting with the columns and ending with the lines. The LCL algorithm then returns the solution with the maximum density.

The advantage of such an algorithm is its small time complexity.

\begin{theorem}
	The time complexity of the LCL algorithm is $O(p \cdot q)$. 
\end{theorem}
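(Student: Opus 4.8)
The plan is to bound the cost of each of the two symmetric parts, LC and CL, separately, and then observe that taking the maximum of the two resulting densities costs no more than evaluating $d$ once. I will describe the LC part; the CL part is identical with the roles of lines and columns exchanged. The LC part has two phases: first it sweeps the lines from $p-1$ down to $1$, and for each candidate line $i$ it must decide whether contracting line $i$ (given the contractions already performed) keeps the matrix binary; then it does the same sweep over the $q-1$ columns. So there are $O(p+q)$ validity checks in total, and the claim reduces to showing that each check, together with the bookkeeping needed to perform an accepted contraction, can be done in amortized $O(1)$ time — or at least in total $O(p\cdot q)$ time across the whole sweep.

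First I would argue that one does not need to recompute the contracted matrix from scratch at each step. Observe that contracting line $i$ merely replaces lines $i$ and $i+1$ by their sum and shifts the lines below up by one; contracting column $j$ is the symmetric operation on columns. The key observation is that a contraction of line $i$ is valid precisely when lines $i$ and $i+1$ of the current matrix have disjoint supports, i.e. there is no column $j$ with a $1$ in both. If we maintain the current matrix explicitly as a $p\times q$ array (it only shrinks), testing disjointness of two rows costs $O(q)$, and actually performing the contraction — summing the two rows and compacting — also costs $O(q)$ plus the cost of physically shifting the rows below, which I would handle by keeping a list of "live" row indices rather than moving data, so that a contraction is $O(q)$. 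Over the at most $p-1$ line steps this is $O(p\cdot q)$; the subsequent column sweep is symmetric and costs $O(p \cdot q)$ as well. Finally, computing $d(M^{LC})$ from Definition~\ref{def:density} is a sum over all entries with an $O(1)$ neighbourhood stencil, hence $O(p\cdot q)$; likewise for $M^{CL}$; and comparing the two densities is $O(1)$. Summing all contributions gives $O(p\cdot q)$.

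The one point that needs a little care — and the place I would expect a referee to push back — is the claim that the row shifts after an accepted contraction do not accumulate to $\Theta(p^2 q)$ in the worst case. The clean fix is the "live index list" device above: represent the current set of rows as a doubly linked list (or simply an array of surviving indices), so that deleting a row after merging is $O(1)$ and we never recopy the tail of the matrix; the $j$-loop that merges the two rows' entries and that later feeds the column phase then touches each of the $O(q)$ columns once per step. With that representation in hand every step is genuinely $O(q)$ (respectively $O(p)$ in the column phase), the number of steps is $O(p+q)$, and the total is dominated by the $O(p\cdot q)$ cost of the final density evaluations. One should also note that, since all contractions we ever perform are valid by construction, the entries of the maintained matrix stay in $\{0,1\}$, so each entry is a single bit and the arithmetic in the disjointness test and the merge is constant-time; no issue of growing numbers arises.
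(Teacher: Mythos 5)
Your proof is correct and follows essentially the same route as the paper's: both decompose the work into the line sweep, the column sweep, and the final density evaluations, bound each validity check by $O(q)$ (resp.\ $O(p)$), and defuse the only real danger --- accumulated row/column shifts costing $\Theta(p^2 q)$ --- by never physically moving data. The only difference is the bookkeeping device: you keep a linked list of live row indices, while the paper overwrites line $i$ of an auxiliary copy with the sum of lines $i$ and $i+1$ during the downward sweep and then produces the contracted matrix in one $O(p\cdot q)$ pass using an offset array $A_i$ counting the contracted lines below $i$; both devices are equally valid.
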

\begin{proof}
	The four sets $I^{LC}$, $J^{LC}$, $I^{CL}$ and $J^{CL}$ can be implemented in time $O(p \cdot q)$ using an auxiliary matrix $M'$. The proof is given for the first one, the implementation of the three other ones is similar. At first, we copy $M$ into $M'$. For each line $i$ from $p-1$ to $1$ of $M'$, we check with $2q$ comparisons if there is a column $j$ such that $M'_{i,j} = M'_{i+1,j} = 1$. In that case, we do nothing. Otherwise, we add $i$ to $I^{LC}$ and we replace line $i$ with the sum of the $i$-th and the $i+1$-th lines.
	
	Finally, given a matrix $M$ and a set of lines $I$, one can compute $C(M,I,\emptyset)$ in time $O(p \cdot q)$ by, firstly, computing in time $O(p)$ an array $A$ of size $p$ such that $A_i$ is the number of lines in $I$ strictly lower than $i$ and, secondly, returning a matrix $C$ of size $p - |I| \times q$ such that $C_{i-A_i,j} = M_{i,j}$.
\end{proof}

\begin{remark}
	Note that, if there is at most one 1 per line of the matrix of the matrix, the LCL algorithm is asymptotically a 4-approximation when $n$ approaches infinity. Indeed, the LC part returns a line matrix in which each entry is a 1. The density of this solution is $n-1$. As the maximum density is $4n$ by Lemma~\ref{lem:bounds}, the ratio is $4\frac{n}{n-1}$. On the contrary, an example given in the external report \cite{WatelPoirionAppendix} proves that this algorithm is, in the worst case, at least a $O(\sqrt{n})$-approximation. 
\end{remark} 

\subsection{The greedy algorithm}

The greedy algorithm tries to maximize the density at each iteration. The algorithm computes $d(C(M,\{i\},\emptyset))$ and $d(C(M,\emptyset, \{j\}))$ for each line $i$ and each column $j$ if the contraction is valid. It then chooses the line or the column maximizing the density. It starts again until the solution is maximal.

\begin{theorem}
	The time complexity of the Greedy algorithm is $O(p^2 \cdot q^2)$. 
\end{theorem}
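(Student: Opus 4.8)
The plan is to separate the analysis into two parts: bounding the number of iterations of the main loop, and bounding the cost of a single iteration.

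First I would bound the number of iterations. Each iteration performs exactly one contraction — of a single line or of a single column — which decreases the number of rows or the number of columns of the current matrix by one. Since a matrix with a single row admits no valid line contraction and a matrix with a single column admits no valid column contraction, the algorithm performs at most $p-1$ line contractions and at most $q-1$ column contractions overall, hence at most $p+q-2 = O(p+q)$ iterations.

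Next I would bound the cost of one iteration, say on a current matrix $M'$ of size $p' \times q'$ with $p' \le p$ and $q' \le q$. For a fixed line $i \in \llbracket 1; p'-1 \rrbracket$, the contraction $C(M',\{i\},\emptyset)$ is valid if and only if rows $i$ and $i+1$ of $M'$ share no column containing a $1$, which is tested in $O(q')$ time; and when it is valid, $C(M',\{i\},\emptyset)$ coincides with $M'$ outside rows $i-1,\dots,i+2$, so only the $O(q')$ entries equal to $1$ in those four rows have their neighbourhood modified. Since each $1$ has at most $8$ neighbours, $d(C(M',\{i\},\emptyset))$ can then be deduced from $d(M')$ in $O(q')$ additional time. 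Summing over the at most $p'-1$ candidate lines costs $O(p'q')$, and symmetrically the at most $q'-1$ candidate columns cost $O(p'q')$; selecting the best candidate is $O(p'+q')$ and performing the retained contraction is $O(p'q')$ (for instance with the offset-array trick already used in the proof for the LCL algorithm), so one iteration runs in $O(p'q') = O(pq)$ time. Maintaining $d$ across iterations only requires computing $d(M)$ once at the start, in $O(pq)$ time. Multiplying the two bounds, the total running time is $O((p+q)\,pq) = O(p^2 q + p q^2) = O(p^2 q^2)$, using $p,q \ge 1$.

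The only delicate step, and the one I would write out carefully, is the locality observation: that a single line contraction changes the density only inside a constant-width band of rows around the contracted line (and symmetrically for a column contraction), which is what allows each of the $O(p'+q')$ candidates to be scored in $O(q')$ resp. $O(p')$ time rather than $O(p'q')$ time. Recomputing each candidate density from scratch would instead make one iteration cost $O((p'+q')p'q')$ and the overall bound $O((p+q)^2 pq)$, which still shows the algorithm is polynomial but matches the claimed $O(p^2 q^2)$ only when $p=\Theta(q)$; so the incremental density update is the point on which the stated bound rests.
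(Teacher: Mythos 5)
Your proof is correct and follows essentially the same route as the paper's: the key step in both is the locality observation that a single line (resp.\ column) contraction only affects the density within the four rows $i-1,\dots,i+2$ (resp.\ four columns), so each candidate can be scored incrementally in $O(q)$ (resp.\ $O(p)$) time, giving $O(p\cdot q)$ per iteration. The only difference is that you bound the number of iterations by $p+q-2$ where the paper uses the looser (but still valid) bound $p\cdot q$; your count in fact yields the slightly stronger total $O(p^2q+pq^2)$, which of course still implies the stated $O(p^2\cdot q^2)$.
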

\begin{proof}
	There are at most $p \cdot q$ iterations. At each iteration, we compute one density per line $i$ and one density per column $j$. The density of $C(M,\{i\},\emptyset)$ is the density of $M$ plus the number of new neighbor pairs of 1 due to the contraction of lines $i$ and $i+1$. The increment can be computed in time $O(q)$ as there are at most three new neighbors for each of the $q$ entries of the four lines $i-1$ to $i+2$. Similarly, the density of $C(M,\emptyset,\{j\})$ can be computed in time $O(p)$. Thus one iteration takes $O(p \cdot q)$ iterations. 
\end{proof}

\begin{remark}
	We prove in \cite{WatelPoirionAppendix} that the greedy algorithm is at least a $O(\sqrt{n})$-approximation algorithm.
\end{remark}

\subsection{The neighborization algorithm}

The neighborization algorithm is a greedy algorithm trying to maximize, at each iteration, the number of couple of entries that can be moved next to each other with a contraction. This algorithm is designed to avoid the traps in which the LCL algorithm and the Greedy algorithm fall in by never contracting lines and columns that could prevent some 1 entries to gain a neighbor.

We define a function $N$ from $(\llbracket 0;p-1 \rrbracket \times \llbracket 0;q-1 \rrbracket)^2$ to $\{0,1\}$.
For each couple $c = ((i,j),(i',j'))$ such that $M_{i,j} = 0$ or $M_{i',j'} = 0$, $N(c) = 0$. Otherwise, $N(c) = 1$ if and only if there is a sublist of lines $I$ and a sublist of columns $J$ such that $C(M,I,J)$ is valid and such that the two entries are moved next to each other with this contraction. Finally, we define $N(M)$ as the sum of all the values $N((i,j),(i',j'))$. The algorithm computes $N(C(M,\{i\},\emptyset))$ and $N(C(M,\emptyset, \{j\}))$ for each line $i$ and each column $j$ if the contraction is valid. It chooses the line or the column maximizing the result and starts again until the solution is maximal.

\begin{theorem}
	The time complexity of the Greedy algorithm is $O(n^2 \cdot p^3 \cdot q^3 \cdot (p+q))$. 
\end{theorem}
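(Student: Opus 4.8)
The plan is to multiply three quantities: the number of iterations of the neighborization algorithm, the number of candidate contractions inspected per iteration, and the cost of one evaluation of $N$ on a contracted matrix; the last of these reduces further to the cost of deciding, for a single pair of $1$-entries, whether $N(c)$ equals $0$ or $1$. First I would bound the number of iterations: each iteration contracts one line or one column and hence permanently decreases $p$ or $q$ by one, and neither dimension can drop below $1$, so there are at most $p+q-2$ iterations. Within an iteration the algorithm forms one matrix $C(M,\{i\},\emptyset)$ per admissible line $i$ and one matrix $C(M,\emptyset,\{j\})$ per admissible column $j$, evaluates $N$ on each, and keeps a maximizer; that is at most $p+q-2\le pq$ evaluations of $N$. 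Each such matrix has size at most $p\times q$ and at most $n$ entries equal to $1$, and by definition $N$ of such a matrix is the sum of $N(c)$ over the $O(n^2)$ pairs $c$ of $1$-entries (all other pairs contribute $0$). So the total running time is $O\big((p+q)\cdot pq\cdot n^2\cdot\psi\big)$, where $\psi$ is the time to decide $N(c)$ for one pair, and it remains to show $\psi=O(p^2q^2)$.

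The heart of the argument is this last step. I would first record a \emph{monotonicity} property of validity: if $C(M,I,J)$ is valid and $I'\subseteq I$, $J'\subseteq J$, then $C(M,I',J')$ is valid. This follows because a set $I$ of contracted lines sends the $1$-entry originally in row $a$ to row $a-|I\cap\llbracket 1;a-1\rrbracket|$ (symmetrically for columns), so dropping a contraction can only move two distinct $1$-entries further apart, never merge two that were separate; a short case analysis on the position of the two rows relative to the removed line makes this precise. Now fix $c=((i_1,j_1),(i_2,j_2))$ with, without loss of generality, $i_1\le i_2$. After contracting a line set $I$, the two images are separated by $(i_2-i_1)-|I\cap\llbracket i_1;i_2-1\rrbracket|$ rows, a nonnegative quantity that must equal $0$ or $1$ for the entries to become neighbors; hence $I$ must contain all of $\llbracket i_1;i_2-1\rrbracket$ or all but exactly one element of it, and by monotonicity we may assume $I$ contains nothing else. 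The same holds for $J$ and the interval of columns strictly between $j_1$ and $j_2$. This leaves at most $(i_2-i_1+1)(|j_2-j_1|+1)\le pq$ candidate pairs $(I,J)$, and $N(c)=1$ if and only if at least one of them yields a binary (hence valid) matrix: indeed, for such a minimal candidate the row separation and the column separation of the two entries are each in $\{0,1\}$ and cannot both be $0$ (that would force an entry value $\ge 2$), so validity automatically places the two entries in neighboring, non-coinciding cells. For each candidate, $C(M,I,J)$ is built and tested for being binary in $O(pq)$ time (exactly as in the implementation in the proof of the LCL time complexity), and the degenerate cases $i_1=i_2$ and $|j_1-j_2|\le 1$ are handled directly. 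Thus $\psi=O(pq\cdot pq)=O(p^2q^2)$.

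Combining the estimates gives a total running time of $O\big((p+q)\cdot pq\cdot n^2\cdot p^2q^2\big)=O\big(n^2\,p^3\,q^3\,(p+q)\big)$, as claimed. I expect the monotonicity lemma and the reduction it enables to be the main obstacle: the seemingly global existential question defining $N(c)$ — ``is there \emph{any} valid contraction bringing these two $1$-entries together?'' — collapses to checking a polynomially small, explicitly describable family of minimal contractions. Once that is in place, the rest is routine counting, with comfortable slack in the final estimate (we bound the number of candidate contractions per iteration crudely by $pq$ rather than $p+q-2$).
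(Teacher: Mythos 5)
Your proof is correct and follows essentially the same decomposition as the paper's: (number of iterations) $\times$ (candidate contractions evaluated per iteration) $\times$ ($O(n^2)$ pairs of $1$-entries) $\times$ ($O(p^2 q^2)$ per pair, obtained by enumerating the $O(pq)$ minimal contractions between the two entries and testing each for validity in $O(pq)$). You are in fact more careful than the paper on the last step --- the monotonicity lemma and the treatment of non-diagonal adjacency are left implicit there --- and your split of the first two factors ($p+q$ iterations, $O(pq)$ evaluations each) is transposed relative to the paper's ($O(pq)$ iterations, $p+q$ evaluations each), but the product, and hence the bound, is the same.
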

\begin{proof}
	Let $M$ be a binary matrix, we first determine the time complexity we need to compute $N(M)$. Let $((i,j),(i',j'))$ be two coordinates such that $M_{i,j} = M_{i',j'} = 1$. We assume $i < i'$ and $j < j'$. The two entries may be moved next to each other if $i'- i -1$ of the $i'-i$ lines and $j'- j -1$ of the $j'-j$ columns between the two entries may be contracted and this can be done in time $O(p \cdot q \cdot (j'-j) \cdot (i'-i)) = O(p^2 \cdot q^2)$. As there are at most $n^2$ entries satisfying $M_{i,j} = M_{i',j'} = 1$, we need $O(n^2 \cdot p^2 \cdot q^2)$ operations to compute $N(M)$.  
	
	As there are at most $p \cdot q$ iterations. At each iteration, we computes one value per line $i$ and one value per column $j$ in time $O(n^2 \cdot p^2 \cdot q^2)$. The time complexity is then $O(n^2 \cdot p^3 \cdot q^3 \cdot (p+q))$.
\end{proof}

\begin{remark}
	We prove in \cite{WatelPoirionAppendix} that the neighborization algorithm is at least a $O(\sqrt{n})$-approximation algorithm.
\end{remark}

\subsection{Numerical results}

In this last subsection, we give numerical results of the three algorithms in order to evaluate their performances.

The experiments are performed on an Intel(R) Core(TM) i7-4810MQ CPU @ 2.80GHz processor with 8Go of RAM. The algorithms are implemented with Java 8\footnote{The implementations can be found at \url{https://github.com/mouton5000/MMCCode}.}. The algorithms are run on random squared matrices. Given a size $p$ and a probability $r$, we produce a random binary matrix $M$ of size $p \times p$ such that $Pr(M_{i,j} = 1) = r$. The expected value of $n$ is then $r \cdot p^2$. Before executing each algorithm, we first reduce the size of each instance by removing every column and line with no 1.

\paragraph{Small instances. }
We first test the three algorithms on small instances on which we can compute an exact brute-force algorithm. This algorithm exhaustively enumerates every subset of lines and columns for which the contraction is valid and returns the solution with maximum density. The results are summarized on Table~\ref{tab:heuristices:numericalresults:smallinstances:compareToExact} and  Table~\ref{tab:heuristices:numericalresults:smallinstances:compareToEachOther}.

\begin{table}[ht!]
	\centering
	\def\arraystretch{1.2}
	\setlength\tabcolsep{0.075cm}
	\scriptsize
	\caption{This table details the results for each algorithm. For each values of $p$ and $r$, the algorithms are executed on 50 instances. We give for each heuristic the mean running time in milliseconds, the mean ratio between the optimal density $d^*$ and returned density $d$ and the number of instances for which the ratio is 1.}
	\begin{tabular}{| c | c | c | c | c | c | c | c | c |  c | c | c | }
	\hline
	& & Exact & \multicolumn{3}{c|}{LCL} & \multicolumn{3}{c|}{Greedy} & \multicolumn{3}{c|}{Neigh.} \\
	\hline
	$p$ & $r$ & time (ms) & time (ms) & $\frac{d^*}{d}$ & $d = d^*$ & time (ms) &  $\frac{d^*}{d}$ & $d = d^*$ & time (ms) &  $\frac{d^*}{d}$ & $d = d^*$ \\
	\hline
	\multirow{8}{*}{5} & 0.01 & $<$ 1 ms & $<$ 1 ms & 1 & 50 & $<$ 1 ms & 1 & 50 & $<$ 1 ms & 1 & 50\\
	\cline{2-12}
	& 0.02 & $<$ 1 ms & $<$ 1 ms & 1 & 50 & $<$ 1 ms & 1 & 50 & $<$ 1 ms & 1 & 50\\
	\cline{2-12}
	& 0.03 & $<$ 1 ms & $<$ 1 ms & 1 & 50 & $<$ 1 ms & 1 & 50 & $<$ 1 ms & 1 & 50\\
	\cline{2-12}
	& 0.04 & $<$ 1 ms & $<$ 1 ms & 1.00 & 49 & $<$ 1 ms & 1 & 50 & $<$ 1 ms & 1 & 50\\
	\cline{2-12}
	& 0.05 & $<$ 1 ms & $<$ 1 ms & 1.00 & 48 & $<$ 1 ms & 1 & 50 & $<$ 1 ms & 1 & 50\\
	\cline{2-12}
	& 0.1 & $<$ 1 ms & $<$ 1 ms & 1.00 & 46 & $<$ 1 ms & 1.00 & 49 & $<$ 1 ms & 1 & 50\\
	\cline{2-12}
	& 0.2 & $<$ 1 ms & $<$ 1 ms & 1.00 & 45 & $<$ 1 ms & 1.00 & 46 & $<$ 1 ms & 1 & 50\\
	\cline{2-12}
	& 0.3 & $<$ 1 ms & $<$ 1 ms & 1.00 & 43 & $<$ 1 ms & 1.00 & 45 & 2.52 & 1.00 & 49\\
	\hline
	\multirow{8}{*}{10} & 0.01 & $<$ 1 ms & $<$ 1 ms & 1.00 & 48 & $<$ 1 ms & 1 & 50 & $<$ 1 ms & 1 & 50\\
	\cline{2-12}
	& 0.02 & $<$ 1 ms & $<$ 1 ms & 1.02 & 46 & $<$ 1 ms & 1.00 & 46 & $<$ 1 ms & 1 & 50\\
	\cline{2-12}
	& 0.03 & $<$ 1 ms & $<$ 1 ms & 1.04 & 37 & $<$ 1 ms & 1.00 & 41 & 1.22 & 1.00 & 49\\
	\cline{2-12}
	& 0.04 & $<$ 1 ms & $<$ 1 ms & 1.02 & 35 & $<$ 1 ms & 1.00 & 39 & 1.92 & 1.00 & 49\\
	\cline{2-12}
	& 0.05 & $<$ 1 ms & $<$ 1 ms & 1.10 & 28 & $<$ 1 ms & 1.00 & 26 & 1.98 & 1.00 & 46\\
	\cline{2-12}
	& 0.1 & 2.60 & $<$ 1 ms & 1.00 & 19 & $<$ 1 ms & 1.00 & 21 & 15.50 & 1.00 & 34\\
	\cline{2-12}
	& 0.2 & $<$ 1 ms & $<$ 1 ms & 1.00 & 23 & $<$ 1 ms & 1.00 & 23 & 66.42 & 1.00 & 40\\
	\cline{2-12}
	& 0.3 & $<$ 1 ms & $<$ 1 ms & 1.00 & 31 & $<$ 1 ms & 1.00 & 34 & 66.64 & 1.00 & 42\\
	\hline
	\multirow{8}{*}{15} & 0.01 & $<$ 1 ms & $<$ 1 ms & 1.16 & 33 & $<$ 1 ms & 1.00 & 43 & $<$ 1 ms & 1 & 50\\
	\cline{2-12}
	& 0.02 & $<$ 1 ms & $<$ 1 ms & 1.06 & 21 & $<$ 1 ms & 1.00 & 25 & 1.64 & 1.00 & 40\\
	\cline{2-12}
	& 0.03 & $<$ 1 ms & $<$ 1 ms & 1.08 & 17 & $<$ 1 ms & 1.00 & 17 & 4.36 & 1.00 & 40\\
	\cline{2-12}
	& 0.04 & 3.76 & $<$ 1 ms & 1.02 & 11 & $<$ 1 ms & 1.00 & 15 & 14.84 & 1.00 & 34\\
	\cline{2-12}
	& 0.05 & 9.40 & $<$ 1 ms & 1.02 & 18 & $<$ 1 ms & 1.00 & 14 & 38.96 & 1.00 & 33\\
	\cline{2-12}
	& 0.1 & 295.74 & $<$ 1 ms & 1.00 & 6 & $<$ 1 ms & 1.00 & 8 & 355.54 & 1.00 & 19\\
	\cline{2-12}
	& 0.2 & 28.24 & $<$ 1 ms & 1.00 & 14 & $<$ 1 ms & 1.00 & 18 & 892.10 & 1.00 & 33\\
	\cline{2-12}
	& 0.3 & $<$ 1 ms & $<$ 1 ms & 1.00 & 30 & $<$ 1 ms & 1.00 & 37 & 541.58 & 1.00 & 45\\
	\hline
	\multirow{8}{*}{20} & 0.01 & $<$ 1 ms & $<$ 1 ms & 1.18 & 23 & $<$ 1 ms & 1.00 & 31 & 1.04 & 1.00 & 45\\
	\cline{2-12}
	& 0.02 & 59.06 & $<$ 1 ms & 1.14 & 10 & $<$ 1 ms & 1.00 & 15 & 21.24 & 1.00 & 29\\
	\cline{2-12}
	& 0.03 & 431.60 & $<$ 1 ms & 1.04 & 9 & $<$ 1 ms & 1.00 & 6 & 119.82 & 1.00 & 20\\
	\cline{2-12}
	& 0.04 & 2275.64 & $<$ 1 ms & 1.00 & 2 & $<$ 1 ms & 1.00 & 5 & 273.82 & 1.00 & 19\\
	\cline{2-12}
	& 0.05 & 10223.92 & $<$ 1 ms & 1.00 & 3 & $<$ 1 ms & 1.00 & 4 & 622.92 & 1.00 & 8\\
	\cline{2-12}
	& 0.1 & 44268.36 & $<$ 1 ms & 1.00 & 7 & $<$ 1 ms & 1.00 & 2 & 3809.98 & 1.00 & 17\\
	\cline{2-12}
	& 0.2 & 424.84 & $<$ 1 ms & 1.00 & 15 & $<$ 1 ms & 1.00 & 11 & 5302.22 & 1.00 & 33\\
	\cline{2-12}
	& 0.3 & $<$ 1 ms & $<$ 1 ms & 1.00 & 34 & $<$ 1 ms & 1.00 & 46 & 1553.86 & 1.00 & 49\\
	\hline
\end{tabular}
	\label{tab:heuristices:numericalresults:smallinstances:compareToExact}
\end{table}

\begin{table}[ht!]
	\centering
	\def\arraystretch{1.2}
	\setlength\tabcolsep{0.075cm}
	\scriptsize
	\caption{Each entry of this table details, for each couple of heuristics, the number of instances of Table~\ref{tab:heuristices:numericalresults:smallinstances:compareToExact} (there are 1600 instances) for which the line heuristic gives a strictly better results than the column heuristic. }
	\begin{tabular}{| c | c | c | c | }
	\hline
	& LCL & Greedy & Neigh. \\
	\hline
	LCL& - & 366 & 70\\
	\hline
	Greedy & 426& - & 86\\
	\hline
	Neigh & 629 & 587& - \\
	\hline
\end{tabular}
	\label{tab:heuristices:numericalresults:smallinstances:compareToEachOther}
\end{table}

We can observe from Table~\ref{tab:heuristices:numericalresults:smallinstances:compareToExact} that the running time first increases when $r$ grows and then decreases. Similarly, the number of times the heuristics return an optimal solution first decreases and then increases. The first behavior is explained by the fact that the size of instances with small values of $n$ can be reduced. On the other hand, if $r$ is high, the number of lines and columns of which the contraction is not valid increases and, then, the search space of the algorithms is shortened. Considering the running times, as it was predicted by the time complexities, the LCL and the greedy heuristics are the fastest algorithms. We can observe that the neighborization algorithm can be slower than the exact algorithm on small instances because the running time of the former is more influenced by $n$ than the latter. However, we do not exclude the fact the implementation of the neighborization algorithm may be improved. Considering the quality of the solutions returned by the algorithms, according to Tables~\ref{tab:heuristices:numericalresults:smallinstances:compareToExact} and \ref{tab:heuristices:numericalresults:smallinstances:compareToEachOther}, the neighborization heuristic shows better performances than the greedy and the LCL algorithms.

\paragraph{Big instances. }
We then test the two fastest algorithms LCL and Greedy on bigger instances. The results are given on Table~\ref{tab:heuristices:numericalresults:biginstances}.  Four interesting differences with Table~\ref{tab:heuristices:numericalresults:smallinstances:compareToExact} emerges from Table~\ref{tab:heuristices:numericalresults:biginstances}. Firstly, the LCL algorithm is faster than the greedy algorithm. This is coherent with the time complexities. Secondly, the LCL algorithm does not follow the same behavior as the exact algorithm and the neighborization heuristic for small instances: the running time increases with $r$ even if the search space is shortened. Indeed, contrary to the three other algorithms, the implementation does not depend on this search space. Thirdly, the running time of the greedy algorithm first increases with $r$, then decreases and and finally slowly increases again. This last increase is due to the computation time of the density and the line and columns that can be contracted.  Finally, the solution returned by the LCL algorithm seems to be better for small values of $r$ and, on the other hand, the greedy algorithm returns better densities for middle values. The two algorithms are equivalent for high values of $r$ because those instances can probably not be contracted.

\begin{table}[ht!]
	\centering
	\def\arraystretch{1.2}
	\setlength\tabcolsep{0.05cm}
	\scriptsize
	\caption{This table details the results for the LCL algorithm and the greedy alorithm. For each values of $p$ and $r$, the algorithms are executed on 50 instances. We give for each heuristic the mean running time in milliseconds and how many times the returned density is strictly better than the density returned by the other algorithm.}
	\begin{tabular}{| c | c | c | c | c | c | }
	\hline
	& & \multicolumn{2}{c|}{LCL} & \multicolumn{2}{c|}{Greedy}\\
	\hline
	$p$ & $r$ & \ time (ms) & $d_{L} > d_{G}$ & time (ms) &  $d_{L} < d_{G}$ \\
	\hline
	\multirow{8}{*}{200} & 0.01 & $<$ 1 ms & 49 & 17.78 & 1\\
	\cline{2-6}
	& 0.02 & $<$ 1 ms & 48 & 22.58 & 2\\
	\cline{2-6}
	& 0.03 & $<$ 1 ms & 43 & 21.82 & 5\\
	\cline{2-6}
	& 0.04 & $<$ 1 ms & 31 & 19.26 & 18\\
	\cline{2-6}
	& 0.05 & $<$ 1 ms & 21 & 16.76 & 29\\
	\cline{2-6}
	& 0.1 & 1.28 & 10 & 5.18 & 40\\
	\cline{2-6}
	& 0.2 & 1.92 & 0 & $<$ 1 ms & 0\\
	\cline{2-6}
	& 0.3 & 2.58 & 0 & $<$ 1 ms & 0\\
	\hline
	\multirow{8}{*}{500} & 0.01 & 3.28 & 50 & 382.06 & 0\\
	\cline{2-6}
	& 0.02 & 3.58 & 44 & 321.30 & 6\\
	\cline{2-6}
	& 0.03 & 3.92 & 17 & 237.06 & 33\\
	\cline{2-6}
	& 0.04 & 4.56 & 10 & 164.82 & 40\\
	\cline{2-6}
	& 0.05 & 4.88 & 4 & 104.48 & 46\\
	\cline{2-6}
	& 0.1 & 6.80 & 0 & 4.70 & 2\\
	\cline{2-6}
	& 0.2 & 10.66 & 0 & 3.34 & 0\\
	\cline{2-6}
	& 0.3 & 15.06 & 0 & 4.58 & 0\\
	\hline
\end{tabular}\begin{tabular}{| c | c | c | c | c | c | }
		\hline
		& & \multicolumn{2}{c|}{LCL} & \multicolumn{2}{c|}{Greedy}\\
		\hline
	$p$ & $r$ & time (ms) & $d_{L} > d_{G}$ & time (ms) &  $d_{L} < d_{G}$ \\
	\hline
	\multirow{8}{*}{1000} & 0.01 & 12.00 & 50 & 2832.52 & 0\\
	\cline{2-6}
	& 0.02 & 14.04 & 21 & 1890.40 & 29\\
	\cline{2-6}
	& 0.03 & 16.34 & 1 & 1099.38 & 49\\
	\cline{2-6}
	& 0.04 & 17.72 & 1 & 553.90 & 49\\
	\cline{2-6}
	& 0.05 & 18.74 & 5 & 233.70 & 45\\
	\cline{2-6}
	& 0.1 & 24.72 & 0 & 7.82 & 0\\
	\cline{2-6}
	& 0.2 & 41.50 & 0 & 12.62 & 0\\
	\cline{2-6}
	& 0.3 & 59.18 & 0 & 18.36 & 0\\
	\hline
	\multirow{8}{*}{2000} & 0.01 & 53.54 & 49 & 22068.00 & 1\\
	\cline{2-6}
	& 0.02 & 59.96 & 0 & 10664.44 & 50\\
	\cline{2-6}
	& 0.03 & 65.66 & 0 & 3914.08 & 50\\
	\cline{2-6}
	& 0.04 & 71.68 & 6 & 1049.00 & 44\\
	\cline{2-6}
	& 0.05 & 76.36 & 0 & 186.04 & 10\\
	\cline{2-6}
	& 0.1 & 100.16 & 0 & 28.88 & 0\\
	\cline{2-6}
	& 0.2 & 167.42 & 0 & 50.46 & 0\\
	\cline{2-6}
	& 0.3 & 237.54 & 0 & 72.88 & 0\\
	\hline
\end{tabular}
	\label{tab:heuristices:numericalresults:biginstances}
\end{table}

\section{Conclusion}

In this paper, we introduced the Maximum Matrix Contraction problem (MMC). 
We proved this problem is NP-Complete. However, we also proved that every algorithm which solves this problem is an $O(\sqrt{n})$-approximation algorithm. Considering that the NP-Completeness was derived from the Maximum Clique problem, and that this problem cannot be polynomially approximated to within $n^{\frac{1}{2}-\varepsilon}$, MMC is very likely to not being approximable to within the same ratio. Such a result would almost tight the approximability of MMC.

Moreover, we studied four algorithms to solve the problem, an integer linear program, a first-come-first-served algorithm and two greedy algorithms, and gave numerical results. It appears firstly that integer linear programming is not adapted to MMC while the three other heuristics returns really good quality solutions in short amount of time even for large instances. Those results seems to disconfirm the $n^{\frac{1}{2}-\varepsilon}$ inapproximability ratio. It would be interesting to deepen the study in order to produce a constant-factor polynomial approximation algorithm or a polynomial-time approximation scheme if such an algorithm exists.

\bibliographystyle{splncs}
\bibliography{edla}

\begin{thebibliography}{1}

\bibitem{Pillai2015}
Pillai, A., Chick, J., Johanning, L., Khorasanchi, M., de~Laleu, V.:
\newblock {Offshore wind farm electrical cable layout optimization}.
\newblock Engineering Optimization \textbf{47}(12) (2015)  1689--1708

\bibitem{Karp1972}
Karp, R.M.:
\newblock {Reducibility among combinatorial problems}.
\newblock In: Complexity of Computer Computations.
\newblock Springer (1972)  85--103

\bibitem{Hastad1999}
H{\aa}stad, J.:
\newblock {Clique is hard to approximate within n{\^{}}(1-$\epsilon$)}.
\newblock Acta Mathematica \textbf{182}(1) (1999)  105--142

\bibitem{WatelPoirionAppendix}
Watel, D., Poirion, P.:
\newblock {The Maximum Matrix Contraction problem : Appendix}.
\newblock Technical Report {CEDRIC-16-3645}, {CEDRIC laboratory, CNAM, France}
  (2016)

\bibitem{Lubin2015}
Lubin, M., Dunning, I.:
\newblock {Computing in operations research using Julia}.
\newblock INFORMS Journal on Computing \textbf{27}(2) (2015)  238--248

\end{thebibliography}

\end{document}